\documentclass[sigconf,balance=false]{acmart}
\acmConference[ICAIF '26 Workshop]{ICAIF 2026 Workshop on Financial AI Security, Privacy and Safety}{November 2026}{Milan, Italy}
\acmYear{2026}
\copyrightyear{2026}
\setcopyright{none}
\acmDOI{}
\acmISBN{}
\usepackage{booktabs}
\usepackage{balance}
\usepackage{amsmath}
\newcommand{\eobi}{\textsc{\textit{Eobi}}}
\newcommand{\obs}{\textit{Observable}}
\newcommand{\obsd}{\textit{Observable+degree}}
\begin{document}

\title{Pricing Cross-Bank Context in Federated Anti-Money Laundering}
\author{Lidia Losavio}
\affiliation{%
  \institution{Universit\`a della Svizzera italiana (USI)}
  \city{Lugano}
  \country{Switzerland}}
\email{lidia.anna.maria.losavio@usi.ch}

\author{Dario Fenoglio}
\affiliation{%
  \institution{Universit\`a della Svizzera italiana (USI)}
  \city{Lugano}
  \country{Switzerland}}
\email{dario.fenoglio@usi.ch}

\author{Martin Gjoreski}
\affiliation{%
  \institution{Universit\`a della Svizzera italiana (USI)}
  \city{Lugano}
  \country{Switzerland}}
\email{martin.gjoreski@usi.ch}

\author{Marc Langheinrich}
\affiliation{%
  \institution{Universit\`a della Svizzera italiana (USI)}
  \city{Lugano}
  \country{Switzerland}}
\email{marc.langheinrich@usi.ch}

\renewcommand{\shortauthors}{Losavio et al.}
\begin{abstract}
Most graph-based Anti-Money Laundering (AML) methods assume access to a pooled transaction graph containing all accounts and payments, which in reality does not exist. Federated learning offers the necessary privacy guarantees but it remains unclear how money laundering patterns spanning separate graphs can effectively be detected. 
We exploit transactions already recorded by both endpoint banks to approximate remote account states locally, enabling multi-layer Graph Neural Network (GNN) computation without additional account-specific exchange. Starting from this baseline, we test whether the remaining context is worth sharing, using a disclosure ladder that adds two transaction counts per remote account or the counterparty's own GNN states. On AMLworld HI-Small, across six GNN backbones and up to 32 silos, the observable baseline yields the largest gain and keeps it as fragmentation shrinks the visible slice; counts add a robust correction, while exact remote states help only in specific regimes. In controlled privacy--utility studies, the noise needed to protect shared states erases their detection value before it buys meaningful privacy, whereas counts stay useful even under strong privacy protection. These findings support a practical rule for banks: use what they already record, and formally protect the little that must move.
\end{abstract}

\begin{CCSXML}
<ccs2012>
 <concept><concept_id>10010147.10010257</concept_id><concept_desc>Computing methodologies~Machine learning</concept_desc><concept_significance>500</concept_significance></concept>
 <concept><concept_id>10002978.10003014</concept_id><concept_desc>Security and privacy~Privacy protections</concept_desc><concept_significance>300</concept_significance></concept>
</ccs2012>
\end{CCSXML}
\ccsdesc[500]{Computing methodologies~Machine learning}
\ccsdesc[300]{Security and privacy~Privacy protections}
\keywords{anti-money laundering, federated graph learning, information disclosure, privacy--utility}
\maketitle

\section{Introduction}
Money laundering conceals illicit funds through sequences of transactions that obscure their origin, with an estimated scale of 2--5\% of global GDP per year \cite{UNODC_ML_estimates, AML_EU}. Individual transfers are rarely suspicious in isolation. Instead, illicit activity emerges through flow patterns such as cycles, fan-in/fan-out structures, and chains of intermediate accounts (Figure~\ref{fig:patterns}) \cite{AMLlitRev}. Graph Neural Networks (GNNs) can jointly exploit transaction attributes and the surrounding flow structure, making them well suited to anti-money laundering (AML) detection. Most graph-based AML methods, however, assume access to a pooled transaction graph containing all accounts and payments \cite{altman2023realistic,egressy2024provably,lin2024fraudgt}. Across financial institutions, each bank observes only part of this network, and data fragmentation is recognized both as an AML vulnerability and as a motivation for privacy-conscious collaborative analytics \cite{fatf2021datapooling,fatf2022partnering}.
\begin{figure}[tb]
  \centering
  \includegraphics[width=\linewidth]{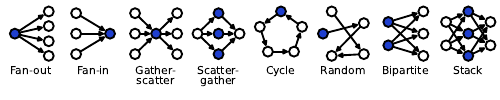}
  \caption{Money laundering patterns \cite{altman2023realistic}. The blue node indicates the main laundering account.}
  \label{fig:patterns}
\end{figure}
Federated learning offers a natural way to train models across institutions by exchanging model updates rather than centralizing their raw data \cite{FLsurvey2026}. Building on this principle, federated graph-learning methods have been developed for graphs partitioned across multiple clients \cite{ambekar2024fedgat,yao2023fedgcn,zhang2021subgraph,lei2023fedcog}, while their application to AML remains underexplored \cite{guo2026cefedgnn}. A further complication is that federated GNN methods are typically tied to how a particular architecture combines graph context. This matters for AML because several AML-tailored solutions that achieve leading performance on centralized benchmarks rely on attention or Transformer-style computation \cite{lin2024fraudgt,losavio2026salt}, which are known to be particularly difficult to federate \cite{yao2023fedgcn,ambekar2024fedgat}. In order to design such collaborative cross-bank AML mechanisms, two more basic questions arise: \emph{(1) what information is genuinely unavailable to each institution; and (2) when does the marginal detection value of sharing it justify the additional privacy exposure?} We study these questions for cross-bank transaction networks.

In contrast to generic federated graph settings, a cross-bank payment is recorded by both endpoint banks with its amount, time, direction, currency, and payment format. Much of the context that federated protocols attempt to recover therefore already sits in each bank's own ledger; what remains genuinely hidden is the counterparty's other activity -- within itself or with third institutions. This motivates \obs{}, a baseline in which each bank approximates a counterparty account's state from the transactions it already records with that account, pooled across all its own accounts. \obs{} uses this approximation wherever the model would otherwise require the remote state, thus providing a no-additional-payload reference (beyond model updates) for measuring how much useful cross-bank context is already locally available and how much value remains in the hidden activity.

We then study a ladder of increasing disclosures: \obs{} defines the zero-additional-payload reference; \obsd{} additionally reveals the remote account's incoming and outgoing transaction counts; and \eobi{} (Exact One-hop Boundary Interface) replaces the local approximation with boundary states computed by the counterparty bank. Such state exchange is established in distributed and federated graph learning \cite{zheng2020distdgl,wan2022pipegcn,guo2026cefedgnn,li2024historical,naman2024optimizing} and serves here as our richer disclosure reference. Our contribution is to measure the incremental value of residual information over the no-additional-payload baseline as the graph becomes increasingly fragmented, and to test whether that value survives formal protection. Importantly, the three interfaces specify what cross-bank information is available rather than how a GNN must aggregate it, making the ladder applicable across GNN architectures, including attention-based models. We evaluate it on AMLworld HI-Small \cite{altman2023realistic} across six GNN backbones under increasing institutional fragmentation, while controlled studies isolate the privacy--utility trade-off of sharing transaction counts and account states.

\textbf{Our key contributions are:} (i) we benchmark attention- and
non-attention-based GNNs tailored to AML under federated graph
fragmentation, across up to 32 silos; (ii) we test three backbone-agnostic cross-bank information interfaces
that isolate the incremental detection utility of increasingly rich
residual context over a zero-additional-payload baseline; and (iii) we quantify how privacy protection changes the utility of residual
degree and state releases under formal privacy guarantees.

\begin{figure*}[t]
\centering
\includegraphics[width=.95\textwidth]{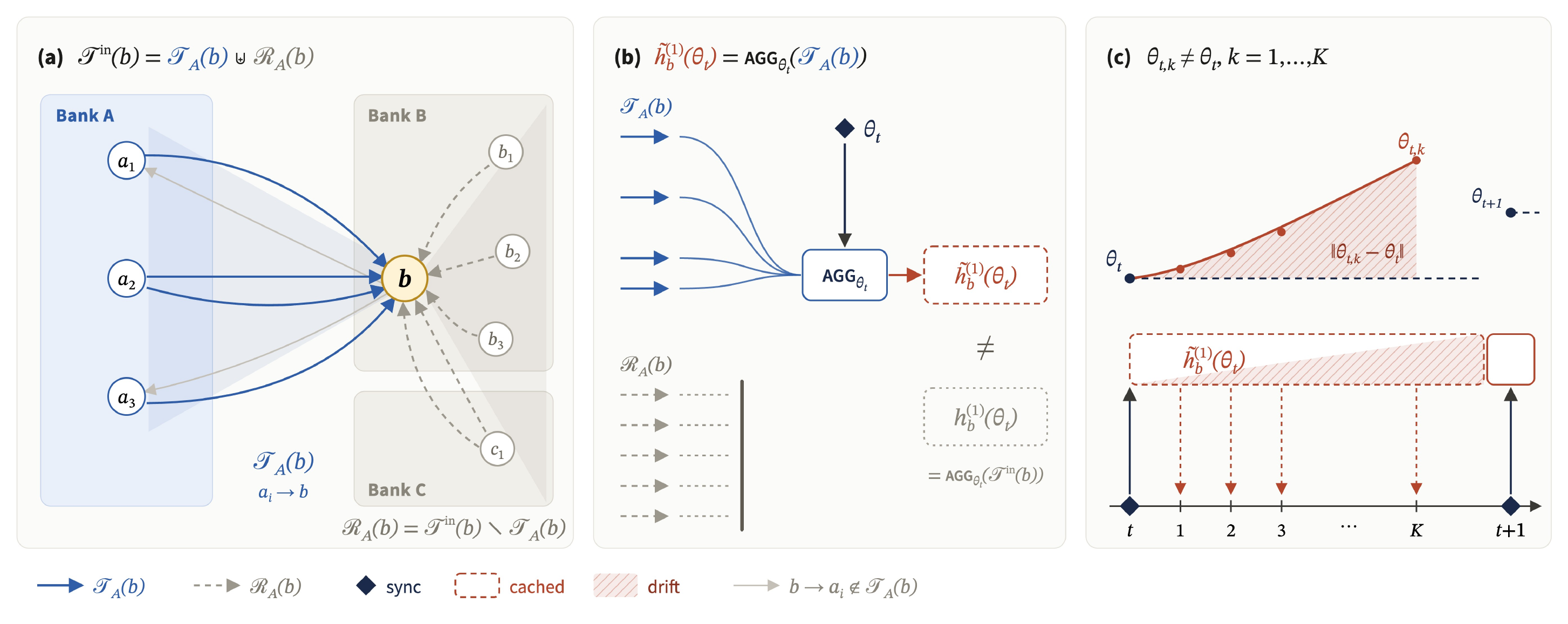}
\caption{\textbf{Observable builds remote context locally.} (a) $\mathcal T_A(b)$ here denotes the visible incoming slice. (b) Bank $A$ uses it to approximate $b$'s first-hop state under $\theta_t$, without exchange. (c) The approximation is cached during local updates and refreshed at synchronization.}
\label{fig:observable}
\end{figure*}

\section{Problem Setting}
\label{sec:setting}
Let $G=(V,E)$ be a directed transaction multigraph of accounts $V$ and
transactions $E$, retaining parallel edges. Each $e=(u,v)\in E$ has attributes
$x_e$ and binary label $y_e\in\{0,1\}$; the task is edge classification.

Let $\mathcal C=\{1,\ldots,M\}$ be the set of $M$ federated clients and let
$o:V\rightarrow\mathcal C$ map each account to its owning client. An edge
$e=(u,v)$ is \emph{internal} if $o(u)=o(v)$ and \emph{cross} otherwise. A
cross edge is replicated in the local graphs of its two endpoint clients for
message passing, but its supervised example is owned only by the sender-side
client $o(u)$. It therefore contributes once to loss and evaluation.
Federated training follows synchronous FedAvg: a coordinator broadcasts the
global model, clients update it on their local graphs for a fixed number of
local epochs, and the returned updates are averaged.

At layer $\ell$, let $h_v^{(\ell)}$ denote the state of account $v$ and let
$\mathcal N(v)$ be the multiset of neighbor--transaction pairs $(u,e)$ used
by the backbone for $v$, so parallel transactions contribute separate
messages. Let $\psi_\ell$ be the edge-conditioned message
function, $\mathrm{AGG}_\ell$ a permutation-invariant aggregator, and
$U_\ell$ the state-update function. A generic message-passing layer is
\begin{equation}
\small
h_v^{(\ell)} = U_\ell\Big(h_v^{(\ell-1)},\
\mathrm{AGG}_{\ell}\!\left(
\left\{\psi_\ell\big(h_v^{(\ell-1)},h_u^{(\ell-1)},x_e\big):
(u,e)\in\mathcal N(v)\right\}\right)\Big).
\label{eq:mp}
\end{equation}
Because laundering typologies are flow patterns over transactions, the
model class of interest is edge-driven message passing: the discriminative
inputs are transaction attributes, while node states accumulate and route
them. We call $G$ \emph{edge-observable} for this base model class when: \textbf{(A1)} every initial state is the same public vector, $h_v^{(0)}=c$, and the first-layer message uses only $c$ and transaction attributes; \textbf{(A2)} for every cross edge $e=(u,v)$, both endpoint clients
know its direction and attributes $x_e$.

\section{Methodology}
\label{sec:method}
Our methodology compares interfaces of increasing information disclosure
while holding the backbone, partition, supervised population, and FedAvg
protocol fixed.

\subsection{Defining Observable and Residual Context}
\label{sec:context}
Consider two distinct clients $A,B\in\mathcal C$ and an account $b$ owned
by $B$, so $o(b)=B$. We call $B$ the \emph{owner} and $A$ the
\emph{observer}; the ordering matters because information may flow from
$B$ to $A$. The complete transaction multiset incident to $b$ is
$\mathcal T(b)=\{e=(u,v)\in E:u=b\ \text{or}\ v=b\}$, and the
\emph{observable slice} of $b$ at client $A$ is
$\mathcal T_A(b)=\{e=(u,v)\in\mathcal T(b):o(u)=A\ \text{or}\ o(v)=A\}$,
namely all transactions between $b$ and \emph{any} account owned by $A$,
not only the account at the opposite endpoint of a target edge.
The \emph{residual context} hidden from $A$ is the multiset difference
$\mathcal R_A(b)=\mathcal T(b)\setminus\mathcal T_A(b)$.
Account $b$ is a directional frontier account from owner $B$ to observer
$A$ when $\mathcal T_A(b)$ is nonempty; the corresponding directional
frontier set is
$\partial_{B\to A}=\{b\in V:o(b)=B,\ \mathcal T_A(b)\ne\varnothing\}$.

The observable slice is only a fraction of $b$'s complete activity. Let
$N_b=|\mathcal T(b)|$ and $n_{A,b}=|\mathcal T_A(b)|$, parallel
transactions included; the \emph{observable coverage} of $b$ at $A$ is
$\rho_{A,b}=n_{A,b}/N_b\in(0,1]$, positive for every frontier account.
Splitting $N_b$ by direction gives the degree vector
$\mathbf d_b=(N_b^{\mathrm{in}},N_b^{\mathrm{out}})$, the payload of
Section~\ref{sec:ladder}.

Assumptions (A1)--(A2) determine who can compute what exactly. The owner
computes $h_b^{(1)}$ exactly from its complete incident view, whereas the
observer, holding only $\mathcal T_A(b)$, can at best approximate it.
Exact deeper states cost a federation-wide exchange: $B$ obtains
$h_b^{(2)}$ exactly only after receiving the first-hop states of its own
frontier accounts \cite{zheng2020distdgl,wan2022pipegcn,wan2022bnsgcn},
and with synchronized parameters and no stochastic operations this reproduces centralized forward values (Appendix~\ref{app:equivalence}). The interfaces of
Section~\ref{sec:ladder} differ precisely in whether $h_b^{(1)}$ is
approximated by the observer or supplied by the owner.

\subsection{Cross-Bank Interfaces}
\label{sec:ladder}
For a frontier account $b\in\partial_{B\to A}$, let $P_{B\to A}(b)$
denote the account-specific payload sent by $B$ to $A$ beyond the model
parameters exchanged by FedAvg. We compare three interfaces
(Figure~\ref{fig:interfaces}, Appendix~\ref{app:protocol}), beginning
with one that requires no such payload.

\textbf{(i) \obs{}.}
\obs{} sends no account-specific payload,
$P^{\mathrm{obs}}_{B\to A}(b)=\varnothing$, and uses only the observable
slice $\mathcal T_A(b)$. At synchronization $t$, observer $A$ evaluates
the first-layer update in Eq.~\eqref{eq:mp} under the refreshed global
parameters $\theta_t$, using the transactions in $\mathcal T_A(b)$
according to the backbone's message directions. This produces
$\widetilde h_{A,b}^{(1)}(\theta_t)$, a local approximation of the
owner-computed state $h_b^{(1)}(\theta_t)$. The approximation pools visible transactions across all relevant
$A$-owned accounts, supplies the remote state for subsequent computation,
and is cached until the next synchronization. Thus \obs{} supports
multi-layer computation without residual transactions or boundary states,
at the cost of missing-context error and staleness.
Figure~\ref{fig:observable} illustrates this construction for incoming
aggregation.

\textbf{(ii) \obsd{}.}
\obsd{} transmits, at each temporal snapshot $s$, the true directed
degree vector,
$P^{\mathrm{deg}}_{B\to A,s}(b)=\mathbf d_{b,s}$.
These are complete transaction counts; the observer can subtract its
own counts to recover the numbers of missing incoming and outgoing
transactions. At the observer, $\mathbf d_{b,s}$ is log-transformed,
standardized, and concatenated to $b$'s public initial state $c$,
augmenting the base initialization in (A1). Message passing still uses only $\mathcal T_A(b)$.
Counts are released at three cumulative snapshots aligned
with the train/validation/test boundaries, using only
transactions recorded up to each snapshot.

\textbf{(iii) \eobi{}.}
\eobi{} releases the learned cut-state record
$P^{\mathrm{bx}}_{B\to A,t}(b)=s_{B\to A,t}(b)$ at round $t$.
This record contains the owner-computed dynamic tensors associated
with $b$ that the model requires across the $B$-to-$A$ cut under global
parameters $\theta_t$. For our two-layer edge classifiers, these
include the first-layer state $h_b^{(1)}$ and, when consumed by the
edge readout, the final endpoint state $h_b^{(2)}$; records are
released in clear.
\eobi{} operates at synchronization. When the server broadcasts
$\theta_t$, each owner evaluates the records of its frontier accounts
from its complete incident view, after the intermediate exchange that
deeper states require (Section~\ref{sec:context}). The observer then
substitutes them for its local approximations. Under (A1)--(A2) the
received $h_b^{(1)}$ equals the centralized one-hop state (computed
without stochastic operations such as dropout), hence the name. With
complete records, the forward pass at the refresh point coincides with
the centralized one; training does not. Received tensors are detached
and cached for one local epoch, growing stale as local weights move, as in historical-embedding schemes \cite{wan2022pipegcn,li2024historical,naman2024optimizing}.

\section{Experimental Setup}
\label{sec:protocol}
We evaluate AMLworld HI-Small under its standard chronological split and transductive protocol \cite{altman2023realistic}, under which the degree snapshots of Section~\ref{sec:ladder} entail no leakage. Banks are assigned to
$M\in\{4,8,16,32\}$ silos, preserving account ownership.
We consider six GNN backbones: GIN, PNA, TransConv, SALT-Trans,
SALT-GAT, and FraudGT
\cite{xu2019gin,corso2020pna,shi2021transconv,losavio2026salt,lin2024fraudgt}.
All federated runs use 100 FedAvg rounds and one local epoch per round.
For each backbone, we reuse the hyperparameters tuned in previous
centralized AML studies
\cite{altman2023realistic,lin2024fraudgt,losavio2026salt},
holding them fixed across the evaluated interfaces.
Model selection maximizes validation F1 score. We report pooled
minority-class test F1 score as mean $\pm$ standard deviation over three
training seeds with a fixed partition. Alongside our three interfaces we include three references:
Local-only, where each silo trains on its own subgraph with no
federation; edge-dropping FedAvg, which averages model updates but
discards cross-silo edges; and centralized training on the pooled graph. We also adapt existing federated graph-learning protocols to compatible backbones: FedGCN to GIN/PNA \cite{yao2023fedgcn} and FedGAT to TransConv
\cite{ambekar2024fedgat}. FedSage+ uses its own GraphSAGE backbone
\cite{zhang2021subgraph}. Additional %
details are provided in Appendix~\ref{app:protocol}.

\section{Results}
\label{sec:benchmark}

\begin{figure*}[!t]
\centering
\includegraphics[width=\textwidth]{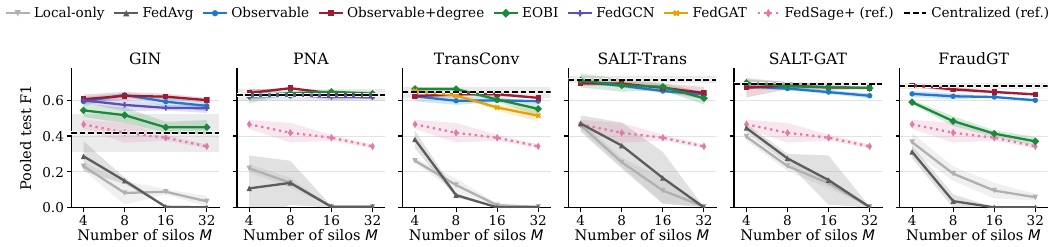}
\caption{HI-Small fragmentation study: pooled test F1 score, mean $\pm$ std over three seeds. Dashed black line and grey band: centralized reference; FedSage+ (pink) repeats across panels as a backbone-independent reference.}
\label{fig:benchmark}
\end{figure*}

Figure~\ref{fig:benchmark} compares the interfaces as fragmentation
increases from $M=4$ to $M=32$ silos. All payloads here are unprotected;
Section~\ref{sec:privacy} asks whether their utility survives protection.

Three main findings emerge. \textbf{First}, the observable slice is a robust proxy for the missing remote state. Its gain over edge-dropping FedAvg (median $+.275$ F1 score at $M=4$) is expected, since it keeps recorded transactions. Its persistence under fragmentation is not. As coverage falls from $.366$ to $.245$ ($M=4$ to $32$), \obs{} loses only $.008$ F1 score for PNA and $.028$ for TransConv, while Local-only and FedAvg approach zero. \textbf{Second}, the cheapest correction is also the most consistently robust overall. At two integers per frontier account, \obsd{} often
matches or exceeds the centralized reference by mean F1 score and remains
competitive under fragmentation. \textbf{Third}, more boundary information does not imply more detection value. \eobi{} helps only in particular backbone--partition
regimes, with no monotone trend in $M$. Among external baselines, FedGCN is stable across silo counts, FedGAT drops from $.653$ to $.514$ F1 score between $M=4$ and $M=32$, and FedSage+'s generated neighbors do not reverse the trend.

\subsection{When Is Residual Context Worth Sharing?}
\label{sec:privacy}
\textbf{Scope and design.} We isolate payload protection in controlled edge-classification tasks. Each example has an observable target transaction and 16 remote boundary accounts, with at most 80 events each. Coverage $\rho$ controls the observable fraction; $\lambda$ controls the target edge's contribution to the label, with smaller $\lambda$ making residual activity more informative. We test selective residuals (e.g., cycles) with TransConv and cardinal residuals (e.g., fan-in/fan-out patterns) with PNA, retaining a distributional negative control. Three datasets and five noise draws per dataset give paired 95\% hierarchical-bootstrap intervals (5,000 resamples). The owner is trusted and the recipient honest-but-curious. Guarantees concern additional releases under fixed observable information and a fixed state encoder; FedAvg model updates are outside scope (Appendix~\ref{app:privacy}). \textbf{Why different payloads help.} Distributional residuals are estimable from a representative slice; cardinal residuals need a complete transaction count (fan-in, structuring); selective residuals depend on one target-matching account whose evidence may be hidden (a cycle-closing witness). The generator varies these mechanisms independently of coverage (Appendix~\ref{app:synthetic}). Table~\ref{tab:clean} (Appendix~\ref{app:synthetic}) first establishes their unprotected utility at $\rho=.1$, $\lambda=.75$: counts recover most of the cardinal gain ($+.033$ vs.\ $+.043$ PR-AUC for \eobi{} over \obs{}), while the selective task benefits chiefly from a complete first-hop state ($+.047$ for \eobi{}, $+.002$ for counts). The privacy studies then test whether these gains survive protection, using the selective state setting at $\lambda=.75$ and the cardinal count setting at $\lambda=.5$.

\begin{figure}[t]
\centering
\includegraphics[width=.8\columnwidth]{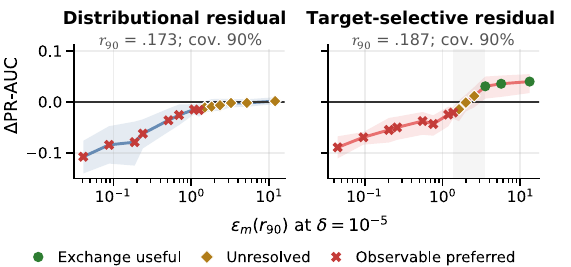}
\caption{\textbf{Protected first-hop states, controlled tasks.} $\Delta$PR-AUC = private \eobi{} $-$ \obs{}, paired 95\% intervals. Horizontal axis: distance-scoped $\epsilon_m(r_{90})$, $\delta=10^{-5}$ (not worst-case $\epsilon_e$); smaller is stronger.}
\Description{Distributional and target-selective residual panels show the utility difference between private state exchange and Observable. Selective exchange is useful only at the weaker end of the distance-scoped privacy range; distributional exchange has no confirmed noisy gain.}
\label{fig:private-state}
\end{figure}

\begin{figure}[t]
\centering
\includegraphics[width=.68\columnwidth]{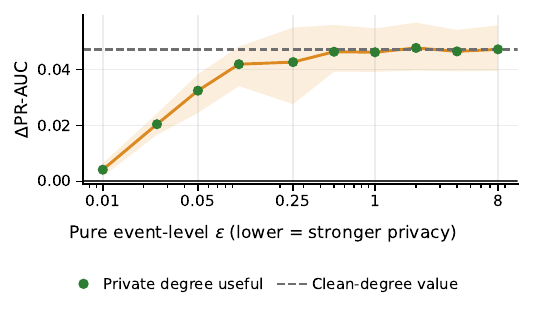}
\caption{\textbf{Protected total counts, cardinal task} (PNA, $\rho=.1$, $\lambda=.5$). $\Delta$PR-AUC relative to \obs{}, paired 95\% intervals; horizontal axis is pure event-level $\epsilon$ per count-vector release.}
\Description{Private count utility remains above Observable for all tested privacy budgets, including epsilon 0.01, and approaches the clean count gain as epsilon increases.}
\label{fig:private-degree}
\end{figure}

\textbf{Boundary states.}
Figure~\ref{fig:private-state} shows that Gaussian noise on clipped
boundary states removes their detection advantage before strong
event-level protection is reached. In the stronger-residual setting ($\lambda=.5$), even an
unprotected gain of $+.147$ PR-AUC is reversed while the
worst-case privacy bound remains weak
($\epsilon_e\approx195$, $\delta=10^{-5}$).
Privacy is accounted per single release, both worst-case ($\epsilon_e$) and scoped to typical state displacements ($\epsilon_m$, Appendix~\ref{app:privacy}), and a separate composition analysis shows that repeated refreshes compound the loss
(Appendix~\ref{app:privacy})
\cite{balle2018analyticGaussian,mironov2017renyi}. These findings
concern the tested state mechanism and do not certify the full
federated training process. \textbf{Degree counts.} In the cardinal PNA task ($\rho=.1$, $\lambda=.5$), the sufficient statistic is one total count $N_b$ per account. Release $N_b+Z$ with two-sided geometric noise, $\Pr[Z=z]\propto\exp(-\epsilon|z|/2)$ \cite{ghosh2009universally}, reusing the same draw across observers. Adding or removing one hidden transaction changes at most two counts by one, so a snapshot release satisfies pure $\epsilon$-DP; three snapshots compose to $3\epsilon$. The clean $+.047$ gain is nearly intact at $\epsilon=.1$ ($+.042$), and remains positive at $\epsilon=.01$ ($+.004$ [$+.002,+.006$]; Figure~\ref{fig:private-degree}). HI-Small's two directed counts have the same sensitivity, but their protected utility is not evaluated here; state and count budgets rest on different adjacency relations and are not directly comparable.

\section{Implications for Banks}
Federated AML is not a uniformly missing-graph problem: observable transactions provide the largest stable gain, degree counts a robust low-cost correction, and richer boundary states conditional value. Counts retain utility under strong formal protection, whereas learned states lose it before their worst-case bound becomes informative. The design rule is therefore: \emph{use what banks already record, and protect only what must move---two integers per shared account when missing cardinality matters---adding richer disclosure only when its value survives protection and refresh.} These results guide payload selection, not end-to-end privacy: AMLworld partitions are controlled abstractions, cached approximations grow stale, and real-bank validation remains open. We deliberately study transaction-driven models; multigraph-specific GNN \cite{egressy2024provably} enhancements remain future work, reverse message passing is compatible, while ego IDs, port encodings, and KYC attributes require additional information and are task-dependent.

\begin{acks}
This work was funded by the ``Resilient Financial Enterprises'' project, supported by the Swiss Innovation Agency (Innosuisse) under grant agreement 119.321 INT-ICT.
\end{acks}

\clearpage
\bibliographystyle{ACM-Reference-Format}
\bibliography{references}
\clearpage
\appendix
\section{Forward Equivalence and Interface Scope}
\label{app:equivalence}
Let $\mathcal N(v)$ be the multiset of neighbor--transaction pairs $(u,e)$ used by the backbone for $v$, so parallel transactions contribute separate messages. A generic edge-conditioned message-passing layer is
\begin{equation}
\begin{split}
h_v^{(\ell)}=U_\ell\Big(h_v^{(\ell-1)},\mathrm{AGG}_\ell\big\{
\psi_\ell(h_v^{(\ell-1)},h_u^{(\ell-1)},x_e):\\
(u,e)\in\mathcal N(v)\big\}\Big).
\end{split}
\end{equation}
The first-hop statement assumes (A1) $h_v^{(0)}=c$ is the same public vector and the first-layer messages use only $c$ and transaction attributes, and (A2) every cross transaction's direction and model-input attributes are available at both endpoint institutions. The neighborhood and any aggregation normalization must be computable from this incident view; additional nonlocal structural inputs are not covered by the statement.

\begin{proposition}[Local first-hop equivalence]
Under (A1)--(A2), with deterministic layers and complete incident inputs, client $o(v)$ computes the same first-layer state $h_v^{(1)}$ for each account it owns as centralized evaluation at the same parameters.
\end{proposition}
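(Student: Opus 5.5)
The plan is a direct argument that the two evaluations of the layer-1 update in the appendix's message-passing equation at $\ell=1$ take identical inputs. Since the layers are deterministic, identical inputs then force identical outputs. Fix the shared parameters $\theta$ and an account $v$ with $o(v)=B$. Centralized evaluation computes
\[
h_v^{(1)}=U_1\big(h_v^{(0)},\mathrm{AGG}_1\{\psi_1(h_v^{(0)},h_u^{(0)},x_e):(u,e)\in\mathcal N(v)\}\big).
\]
I would show that client $B$ can form each argument of this expression exactly.

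\textbf{Step 1: the neighborhood multiset.} I would show that $\mathcal N(v)$, as a multiset of pairs $(u,e)$ with parallel edges kept, is determined by $\mathcal T(v)$ together with the backbone's message-direction rule. Every $e\in\mathcal T(v)$ is either internal to $B$, and so held by $B$, or a cross edge. Cross edges are replicated at both endpoint clients (Section~\ref{sec:setting}), and by (A2) their direction and model-input attributes are known to $B$. Hence $B$'s local graph contains every element of $\mathcal T(v)$ with the correct multiplicity, direction, and $x_e$. It therefore reconstructs $\mathcal N(v)$ exactly, and also any normalization that the scope paragraph requires to be computable from the incident view, such as degree normalizers in GIN or PNA scalers.

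\textbf{Step 2: the messages.} By (A1), $h_v^{(0)}=h_u^{(0)}=c$ is public, including for remote $u$. So every message $\psi_1(c,c,x_e)$ depends only on public data, $\theta$, and $x_e$, all of which $B$ holds. The message multisets at $B$ and in the centralized graph therefore coincide elementwise.

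\textbf{Step 3: conclude.} $\mathrm{AGG}_1$ is permutation-invariant, so any difference in local edge ordering is irrelevant. $U_1$ is deterministic and is fed the same $c$. The outputs are therefore equal. For attention backbones the same reasoning applies: attention scores and their softmax normalization range over $\mathcal N(v)$ and are functions of $c$ and $x_e$, so they are reproduced as well.

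The main obstacle I expect is Step 1: making precise that every ingredient the backbone uses is incident-local. Some backbones might consult inputs that are not incident-local, such as the degree of the neighbor $u$ or global statistics. I would handle this by invoking the stated scope restriction, which excludes such nonlocal structural inputs, rather than proving it for each backbone. I would also note that feature standardization must use shared or public statistics so that $x_e$ is literally identical at both endpoints.
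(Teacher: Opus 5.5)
Your proposal is correct and is essentially the paper's argument: the owner holds the same incident transaction multiset as centralized evaluation, namely its internal edges plus the cross edges whose direction and attributes it knows by (A2). (A1) fixes the initial states, so the messages, their aggregation, and the update coincide. Your Steps 1--3 only make explicit what the paper's short proof leaves implicit: permutation invariance of $\mathrm{AGG}_1$, attention normalization over $\mathcal N(v)$, and deferral to the stated scope restriction for nonlocal inputs (one minor slip: standard GIN uses unnormalized sum aggregation, so it has no degree normalizer to reproduce).
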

\begin{proof}
Client $o(v)$ receives the same incident transaction multiset as centralized evaluation: all internal edges and, by (A2), all cross edges. Assumption (A1) fixes the remaining inputs. The edge-conditioned messages, their aggregation, and the state update therefore coincide.
\end{proof}
This narrow forward equivalence fails with private remote features, missing cross-edge attributes, omitted incident edges, or unavailable nonlocal normalization. It does not imply privacy or an exact backward pass. In particular, it does not equate the observer's $\widetilde h_{A,b}^{(1)}$ with the remote owner's $h_b^{(1)}$. As described in Section~\ref{sec:ladder}, \obs{} constructs the observer's approximation from the observable subset
of the owner's incident inputs. Exact degree vectors are an additional information interface and are not assumed in this zero-payload proposition.

\textbf{Direction in Figure~\ref{fig:observable}.}
The main definitions use the full incident multisets $\mathcal T(b)$ and $\mathcal T_A(b)$. For the figure's incoming channel, define
\[
\mathcal T^{\rm in}(b)=\{(u,b)\in E\},\qquad
\mathcal T_A^{\rm in}(b)=\{(u,b)\in E:o(u)=A\}.
\]
The figure abbreviates $\mathcal T_A^{\rm in}(b)$ and its incoming residual to $\mathcal T_A(b)$ and $\mathcal R_A(b)$. In particular, the pale outgoing edges $b\to a_i$ are observable at $A$ but are not incoming messages to $b$. They are not hidden residual transactions. The illustration retains parallel $a_i\to b$ payments and pools over all $A$-owned accounts connected to $b$. A local approximation can subsequently supply a remote source state along a visible $b\to a_i$ edge, or contextualize a remote endpoint for edge classification.

\textbf{Refresh and exactness.}
At round $t$, the global parameters $\theta_t$ are broadcast before the observer constructs its remote approximations. These states are reused while the local parameters move and rebuilt at the next synchronization. ``One local epoch'' does not imply fixed parameters throughout an epoch: minibatch updates may already create staleness. Figure~\ref{fig:observable} depicts this schedule generically; the plotted drift is schematic.

For an exact synchronized deeper forward computation, owner states must be exchanged in layer order across the federation, together with any model-specific cross-edge tensors. An owner computes $h_b^{(2)}$ exactly only after obtaining the required first-hop states of its own remote neighbors. Its complete incident ledger alone guarantees $h_b^{(1)}$, not all deeper states. With complete records and synchronized deterministic layers, induction then reproduces centralized forward values \cite{zheng2020distdgl,wan2022pipegcn,wan2022bnsgcn}. The benchmark \eobi{} cut record includes $h_b^{(1)}$ and, when consumed by the edge readout, $h_b^{(2)}$. Received states are detached and cached for one local epoch: they are not refreshed after every local optimization step, and no gradient crosses clients. Consequently, neither centralized training equivalence nor a predictive upper bound follows from forward exactness at a refresh point.

\section{HI-Small Protocol and Communication}
\label{app:protocol}
\textbf{Data and partition.}
We preserve the standard chronological AMLworld train/validation/test split \cite{altman2023realistic}. Each bank identifier is assigned to one of $M$ silos and every account inherits its bank's assignment. A silo may therefore represent a group of banks, rather than one real institution. Cross-silo transactions are copied to both endpoint subgraphs for message passing but assigned once to the sender-side silo for supervision and pooled evaluation. At $M=4$, 64.9\% of HI-Small transactions are cross-silo, with 374,791 directional frontier registrations.

\textbf{Backbones and baselines.}
We evaluate GIN \cite{xu2019gin}, PNA \cite{corso2020pna}, TransConv \cite{shi2021transconv}, the statistics-aware attention hybrids SALT-Trans and SALT-GAT \cite{losavio2026salt}, and FraudGT \cite{lin2024fraudgt}. Ego identifiers, port numberings, and reverse message passing are disabled to hold the transaction-driven input regime fixed. Each backbone runs under \obs{}, \obsd{}, and \eobi{}, with Local-only and a centralized reference as anchors and edge-dropping FedAvg \cite{mcmahan2017communication} as the weight-only baseline. Where an external federated protocol admits the backbone, it is adapted: FedGCN \cite{yao2023fedgcn} to GIN/PNA and FedGAT \cite{ambekar2024fedgat} to TransConv. FedSage+ \cite{zhang2021subgraph} carries its own GraphSAGE backbone, so its reference appears across panels.

All federated runs use 100 rounds and one local epoch per round. Backbone hyperparameters and class weights follow the optimal centralized configuration; model selection maximizes validation F1 under an oracle coordinator convention. Reported pooled minority-class test F1 is the mean and standard deviation over three training seeds with a fixed partition. All benchmark comparisons use unprotected payloads; the controlled private releases below are separate experiments.

\begin{figure*}[!t]
\centering
\includegraphics[width=0.86\textwidth]{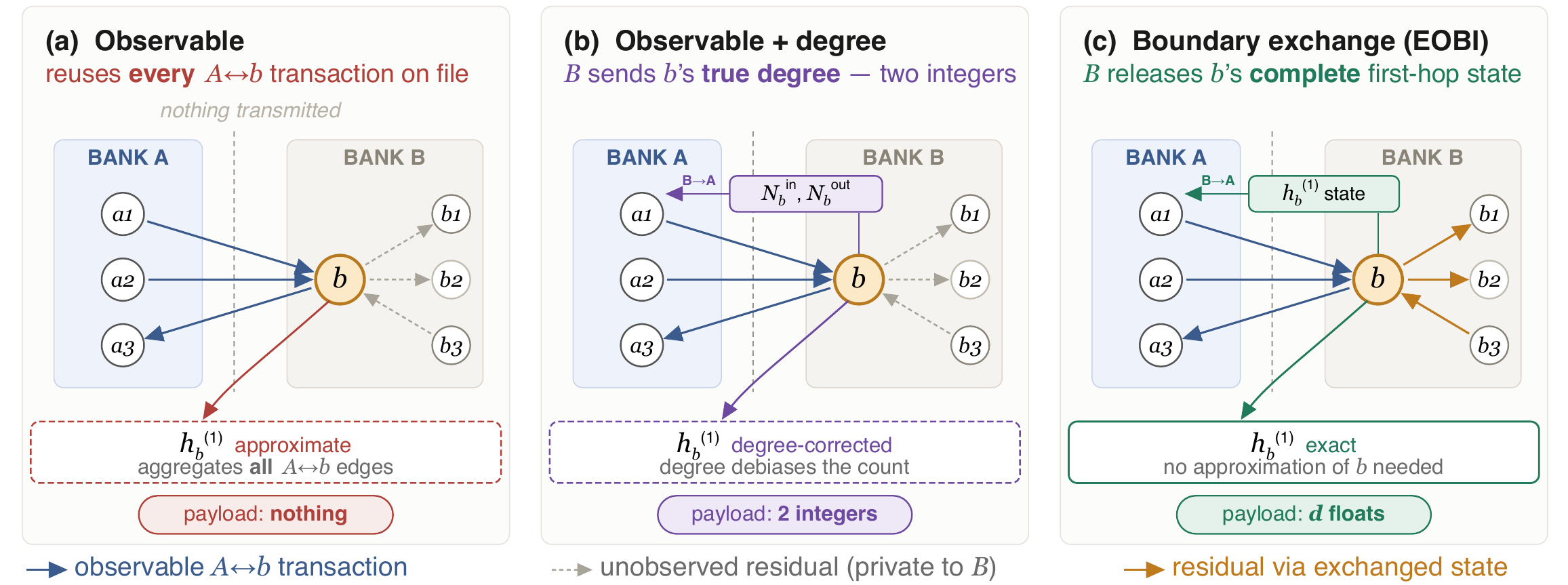}
\caption{Three prices for cross-bank context at a frontier account $b$: \obs{} sends no payload, \obsd{} two degree counts, and \eobi{} the owner-computed cut state (first-hop state shown; the benchmark record also carries $h_b^{(2)}$ when consumed by the readout, i.e., $2d$ floats); raw residual transactions remain at $B$.}
\label{fig:interfaces}
\end{figure*}

\textbf{Payload accounting.}
Figure~\ref{fig:interfaces} summarizes the three payloads. For each owner $B$ and observer $A$, let
\[
\partial_{B\to A}=\{b:o(b)=B,\ \mathcal T_A(b)\ne\varnothing\},
\quad F=\sum_B\sum_{A\ne B}|\partial_{B\to A}|.
\]
\obs{} sends no account-specific payload beyond model updates. \obsd{} sends two counts per directional frontier registration at each of $S=3$ cumulative snapshots aligned with the train/validation/test boundaries. Counts use only transactions available up to the relevant snapshot; they do not use later transactions. The observer log-transforms and standardizes the vector before concatenating it to the remote initial representation. A two-state-level \eobi{} record of width $d$ contains $2d$ scalar values per registration per refresh. Over $R$ rounds, its scalar-value count is $dR/S$ times the degree interface's count; a byte comparison additionally depends on integer and floating-point encodings. Neither raw residual transactions nor a pooled ledger are transferred. Payload counts network traffic only: \obs{} still rebuilds and caches one approximation per frontier registration at each refresh, and \eobi{} additionally requires the intermediate synchronizations of Section~\ref{sec:context}.

\section{Interface Value by Typology}
\label{app:typology}

\begin{figure}[t]
\centering
\includegraphics[width=\columnwidth]{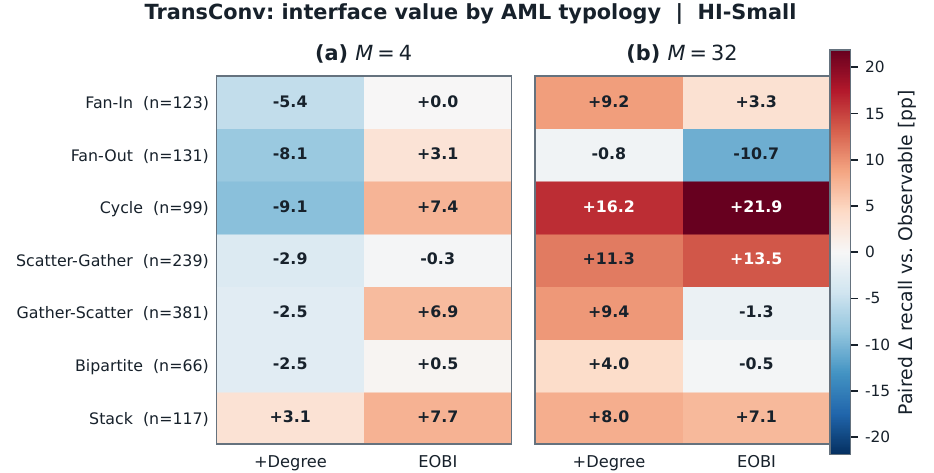}
\caption{Per-typology interface value on HI-Small (TransConv) at
$M{=}4$ and $M{=}32$: paired recall difference vs.\ \obs{}, in
percentage points; $n$ counts typology-annotated test positives.}
\label{fig:typology}
\end{figure}

The benchmark results of Section~\ref{sec:benchmark} pool all laundering typologies. Figure~\ref{fig:typology} disaggregates them by the patterns of Figure~\ref{fig:patterns}, using the typology annotations of AMLworld \cite{altman2023realistic}. For TransConv, the backbone of the selective controlled study (Appendix~\ref{app:synthetic}), each cell reports the recall change when \obsd{} or \eobi{} is added to \obs{}, at low ($M{=}4$) and high ($M{=}32$) fragmentation. Three profiles emerge, matching the residual classes of Section~\ref{sec:privacy}, which is the closest the benchmark comes to a real-data counterpart of the controlled study.

\textbf{Cycle.} As predicted for a selective residual, the transaction closing a cycle may lie outside the labeling silo, depriving the observable slice of the required witness. Consistently, \eobi{} already helps at $M{=}4$, and the largest jump appears at $M{=}32$, where deeper partitioning disperses more witnesses into the residual. The degree also gains there, through a different channel: it cannot identify a witness, but it corrects the attention normalization that the slice biases.

\textbf{Fan-In.} The cardinal pattern par excellence. The degree is useless or even harmful at low fragmentation, and strongly beneficial at high fragmentation, where it repairs the count that the slice systematically underestimates. Two integers outperform the $d$-float \eobi{} payload: for a count, the sufficient statistic is the count, and softmax normalization divides out the very multiplicity that the degree transmits exactly.

\textbf{Fan-Out.} The clear counterexample. The hub sends every transaction of the pattern, so the labeling silo observes it in full by construction and residual context carries no label information; as $M$ grows it only injects more uninformative remote states into the hub, and \eobi{} accordingly decays.

\textbf{Gather--Scatter and Scatter--Gather.} These mix the two hub roles and inherit intermediate profiles: at high fragmentation the former is repaired chiefly by the degree, while the latter benefits from both payloads, consistent with its mixed cardinal and selective structure.

Taken together, the typology profiles indicate that the residual classes of the controlled study are not an artifact of its generator: on HI-Small, cardinal patterns are repaired by the count, selective patterns by the exchanged state, and patterns whose evidence is locally complete by neither.

\section{Controlled Residual-Context Study}
\label{app:synthetic}
The controlled study independently varies observable coverage and residual signal type. Its privacy--utility results are separate from the unprotected HI-Small benchmark.

\subsection{Generator, Interfaces, and Evaluation}
\label{app:generator}
Every example contains an always-observable target transaction and 16
remote boundary accounts, each with at most $N_{\max}=80$ events.
Labels are assigned by a generator with full visibility. It first
computes two scores: a target-edge score $s_e$ from the target
transaction's attributes, and a residual-context score $s_c$ from the
complete activity of the boundary accounts. The two are then blended
into
\begin{equation}
s_\lambda =
\frac{\lambda s_e + (1-\lambda)s_c}
{\sqrt{\lambda^2+(1-\lambda)^2}}+\epsilon,
\qquad
y=\mathbf{1}[s_\lambda\geq q_{0.8}],
\label{app:eq:label}
\end{equation}
where $\lambda\in\{0,.25,.5,.75,1\}$ controls how much of the label
the target transaction alone explains: at $\lambda=1$ residual context
is irrelevant by construction, and smaller values make it increasingly
informative. Gaussian noise $\epsilon$ sets task difficulty, and
thresholding fixes 20\% prevalence. The model never sees these scores:
it must predict $y$ from whatever its interface reveals of the
boundary activity.

Observable coverage is controlled through
$\rho\in\{.1,.3,.6,1\}$: for
each boundary account, \obs{} estimates the first-hop key and value
from a random $\rho$ fraction of its events, with at least one event
always retained. The degree interface augments this partial view with
the account's true total event count $N_b$, whereas \eobi{} provides
the exact first-hop state computed from all events. Here the count is
one sufficient statistic for the synthetic cardinal regime; the
HI-Small interface instead supplies the two directed in/out counts.

We generate a separate synthetic dataset for each residual class. In
the \emph{distributional} regime, $s_c$ depends on the mean activity
profile, which can be estimated from a representative observable
sample. In the \emph{cardinal} regime, it depends on the total event
count, which cannot in general be recovered exactly from an incomplete
slice but is supplied by the degree interface. In the
\emph{selective} regime, only one of the 16 boundary accounts
matters: the target transaction carries a query vector and $s_c$ is
computed from the single account whose key matches it, while the
others are high-activity distractors. A count does not identify that
account, and the relevant evidence may fall outside the observable
slice. This is the synthetic analog of a cycle-closing witness hidden
from the observing institution.

We pair these regimes with GIN, PNA, and TransConv poolers, which use
mean/sum aggregation, multiple statistics, and target-conditioned
attention, respectively---matched to the mechanism each class calls
for. For each configuration, we generate 3{,}000 examples per seed
with a 60/20/20 train/validation/test split, share initialization and
minibatch order across interfaces, and report mean PR-AUC over three
dataset seeds.

\begin{table}[t]
\caption{Unprotected residual value at low coverage ($\rho=.1$, $\lambda=.75$): three-seed mean PR-AUC. Here +count releases one total count.}
\label{tab:clean}
\centering
\begin{tabular}{llccc}
\toprule
Residual & Pooler & \obs{} & +count & \eobi{} \\
\midrule
Distributional & GIN & $.853$ & $.854$ & $.855$ \\
Cardinal & PNA & $.763$ & $.796$ & $.806$ \\
Selective & TransConv & $.822$ & $.824$ & $.869$ \\
\bottomrule
\end{tabular}
\end{table}

\subsection{Unprotected Interface Value}
\label{app:clean-utility}
Table~\ref{tab:clean} reports the low-coverage stress slice. The full grid below varies both target-edge information and observable coverage.

\begin{figure}[t]
\centering
\includegraphics[width=0.95\columnwidth]{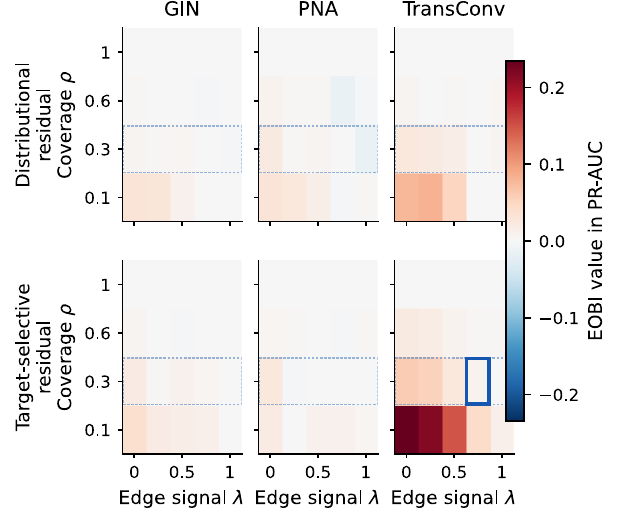}
\caption{Controlled value of unprotected boundary-state exchange over
the $(\lambda,\rho)$ grid. Each cell reports the three-seed mean
$\Delta$PR-AUC (\eobi{}$-$\obs{}); columns vary the target-edge
contribution $\lambda$, rows the observable coverage $\rho$. Dashed
outlines mark $\rho=.3$, within the HI-Small coverage range
$.245$--$.366$; the solid box marks the TransConv $\lambda=.75$
reference slice. Matching coverage alone does not equate a synthetic
task with the benchmark.}
\Description{Heatmaps show the incremental PR-AUC of boundary-state exchange over Observable for three synthetic residual classes as observable coverage and the target-edge label contribution vary. Gains concentrate at lower coverage when residual context remains relevant.}
\label{app:fig:heatmap}
\end{figure}

Figure~\ref{app:fig:heatmap} maps the \eobi{}$-$\obs{} difference over
the full $(\lambda,\rho)$ grid. Two limiting cases are sanity checks:
at $\rho=1$ the interfaces receive the same information, and at
$\lambda=1$ residual context is irrelevant to the label. The
informative region lies between these extremes: gains concentrate
where coverage is low and the residual label-relevant, and shrink
quickly as coverage grows. GIN and PNA also gain when the residual is
label-relevant, so exchange is not an attention-only phenomenon.
These gains concern unprotected payloads; the following study asks
whether they survive formal protection.

\section{Privacy Mechanisms, Accounting, and Utility}
\label{app:privacy}
\subsection{Scope and Threat Model}
\label{app:privacy-scope}
Within the scope of this analysis, \obs{} introduces no
account-specific payload beyond the model updates. We therefore ask
how much formal protection an additional residual payload can absorb
before its value falls below the clean observable interface. We reuse
the datasets and pairing of Appendix~\ref{app:synthetic}, fixing the
selective TransConv setting at the reference slice
($\lambda=.75$, $\rho=.1$), with the distributional regime as negative
control. Only the boundary states are protected; the target edge is
observable by definition.

The owner is trusted to compute the mechanism correctly; the
adversary is an honest-but-curious recipient who knows the mechanism
and holds auxiliary information. Neighboring inputs keep the
recipient's observable slice and target transaction fixed and alter
only residual activity. Each study analyzes a single protected
payload release. The state-sensitivity account treats the encoder
parameters as fixed: it does not account for changes to a learned
encoder or for disclosure through model updates. Retraining the
recipient on a fixed noisy release measures its utility, not the
privacy of the preceding federated training process. In particular,
the Gaussian analysis below concerns synthetic first-hop states,
not the complete multi-level cut-state record used by the benchmark.

\subsection{Protecting the State Payload}
\label{app:state-mechanism}
For one example, let $H$ collect one owner-computed first-hop state
per boundary account. We clip each row to $\ell_2$ norm at most $C=3$
and release once
\begin{equation}
\mathcal M(H)=\operatorname{clip}_C(H)+
\mathcal N\!\bigl(0,(\sigma C)^2I\bigr),
\label{app:eq:gaussian}
\end{equation}
where larger $\sigma$ provides stronger protection but greater
payload distortion. We account for each $\sigma$ under two guarantees.

The first guarantee is event-level differential privacy: it bounds
what the released payload can reveal about any single remote event.
Call two inputs adjacent when they differ by replacement of one
remote event of one boundary account. In this construction, with the
encoder fixed, such a change affects one row of $H$, so the clipped
matrices differ by at most $2C$ in Euclidean norm. The analytic
Gaussian analysis~\cite{balle2018analyticGaussian} then gives, at
$\delta=10^{-5}$, an $\epsilon_e$ such that for every adjacent pair
$H,H'$ and output set $S$,
\begin{equation}
\Pr[\mathcal M(H)\in S]\le
e^{\epsilon_e}\Pr[\mathcal M(H')\in S]+\delta.
\label{app:eq:event-dp}
\end{equation}
Because the bound must hold for the most damaging substitution, this
worst-case account can be much looser than an account restricted to
smaller state changes.

The second account is a distance-scoped profile of the same Gaussian
mechanism, motivated by metric privacy~\cite{chatzikokolakis2013metrics}.
It evaluates the preceding indistinguishability inequality with
sensitivity $r$, restricting it to pairs satisfying
$\|\operatorname{clip}_C(H)-\operatorname{clip}_C(H')\|_2\le r$.
To choose $r$, we construct 1{,}536 event-replacement pairs per regime,
each replacing the most influential event in a held-out example with
a less influential one, and set $r$ to the 90th percentile $r_{90}$
of the resulting clipped-state distances. We denote the resulting
account by $\epsilon_m$.
Within $r_{90}$ the guarantee is formal; the 90\% coverage is an
empirical property of the constructed pairs, not a worst-case
statement or a guarantee covering 90\% of arbitrary future events.
$\epsilon_e$ and $\epsilon_m$ answer different questions and are not
interchangeable, nor directly comparable to guarantees under other
adjacency relations~\cite{guo2026cefedgnn}.

\subsection{Paired Utility and State Results}
\label{app:state-results}
For each noise level $\sigma$, we retrain the recipient model on five
independent noise draws per dataset, each a fixed noisy release,
sharing initialization and minibatch order with the observable
baseline. We report
$\Delta=\mathrm{PR\mbox{-}AUC}_{\mathrm{private\text{-}EOBI}}
-\mathrm{PR\mbox{-}AUC}_{\mathrm{observable}}$
with paired 95\% hierarchical-bootstrap intervals over dataset seeds
and noise draws (5{,}000 resamples). Exchange is called useful
(observable preferred) only when the entire interval lies above
(below) zero.

Clipping alone preserves the selective gain ($+.044$); exchange is
confirmed useful up to $\sigma=.075$ ($\epsilon_m=3.55$,
$\epsilon_e=468$), and observable is confirmed preferable from
$\sigma=.175$ ($\epsilon_m=1.37$, $\epsilon_e=113$). At the noise
levels in between, the intervals contain zero and neither direction
is confirmed. Within the tested grid, exchange thus stops beating
the clean observable slice while the worst-case $\epsilon_e$ remains
very large. This conclusion concerns the tested mechanism and tasks,
not an impossibility result for all private state-exchange methods.

Composition further increases the cost of repeated disclosure. Under
the same per-release distance and noise account, the
$r_{90}$-scoped bound at $\sigma=.075$ rises from $3.55$ for one
release to $74.7$ for 100~\cite{mironov2017renyi}. This accounting
illustration is not a privacy guarantee for the full 100-round
benchmark training procedure.

We also repeat the study at $\lambda=.5$, where unprotected \eobi{}
has clear value for TransConv. The clean gain rises to $+.147$, yet
\obs{} is already confirmed preferable at $\sigma=.125$
($\Delta=-.020$ [$-.032,-.007$]) while the worst-case account remains
loose ($\epsilon_e\approx195$). The distributional control shows
clipping-only $\Delta=+.001$ [$-.007,+.009$] and no confirmed useful
noisy point: noise does not manufacture residual value.

\subsection{Protecting the Total-Count Payload}
\label{app:degree-mechanism}
To complete the privacy--utility analysis, we ask the same question
of the degree. The experiment moves to the cardinal regime, where
the count carries value in the clear (PNA; $\rho=.1$;
$\lambda=.5$; datasets, pairing, and bootstrap unchanged), and
privatizes each frontier account's total count $N_b$, the sufficient
statistic in this regime. At each snapshot, the owner releases one
noisy count $N_b+Z$ per account, with $Z$ two-sided geometric,
\begin{equation}
\Pr[Z=z]\propto\exp(-\epsilon|z|/2).
\label{app:eq:geometric}
\end{equation}
The same noisy value is reused for every observer, rather than
drawing new noise for each recipient. Each observer clips it to the
feasible range $[n_{A,b},N_{\max}]$ as
post-processing~\cite{ghosh2009universally}. The bound $N_{\max}$ and
the observer's already-recorded count $n_{A,b}$ are fixed in the
neighboring-input comparison; no extra privacy loss is charged for
this post-processing of the single release.

Adjacency adds or removes one residual transaction, changing at most
two account counts by one. The released count vector therefore has
$\ell_1$ sensitivity at most 2, so a single snapshot release with
the stated geometric mechanism satisfies pure $\epsilon$-DP. The
three snapshot releases compose to $3\epsilon$. Reusing a release
across observers does not create independent additional releases.

Here the gain survives: the clean $+.047$ over \obs{} is nearly
intact at $\epsilon=.1$ ($+.042$) and, though shrinking, remains
confirmed positive even at $\epsilon=.01$
($+.004$ [$+.002,+.006$], count MAE $32.2$ on $N_{\max}=80$).
The contrast with state exchange is qualitative: the state stops
paying while its worst-case account is still loose, whereas the
total count retains utility under strong formal protection in this
controlled cardinal study. The two privacy parameters protect
different objects under different adjacency relations and are not
directly comparable. These results do not measure protected in/out
degree utility on HI-Small, and neither mechanism accounts for
privacy leakage through federated model updates.

\end{document}